\theoremstyle{plain}
\newtheorem{theorem}{Theorem}
\theoremstyle{remark}
\newcommand{\bit}{\begin{itemize}}
\newcommand{\eit}{\end{itemize}}
\newcommand{\ben}{\begin{enumerate}}
\newcommand{\een}{\end{enumerate}}
\newcommand{\be}{\begin{equation}}
\newcommand{\ee}{\end{equation}}
\newcommand{\ba}{\begin{array}}
\newcommand{\ea}{\end{array}}
\newcommand{\bigO}[1]{\mathcal{O}\left( #1 \right)}
\newcommand{\dd}{\mathrm{d}}
\newcommand{\dt}{\mathrm{d}t}
\newcommand{\norm}[1]{\left|\left|#1\right|\right|}
\newcommand{\inner}[2]{\left< #1,#2 \right>}
\newcommand\cS{\mathcal S}
\newcommand\C{\mathbb C}
\newcommand\R{\mathbb R}
\newcommand{\fel}{\frac{1}{2}}
\newcommand\tr{\operatorname{Tr}}
\newcommand{\ler}[1]{\left( #1 \right)}
\title{The metric property of the quantum Jensen-Shannon divergence}
\author{D\'aniel Virosztek}
\address[D\'aniel Virosztek]{Institute of Science and Technology Austria\\
Am Campus 1, 3400 Klosterneuburg, Austria}
\email{daniel.virosztek@ist.ac.at}
\urladdr{http://pub.ist.ac.at/\~{}dviroszt}
\thanks{D. Virosztek was supported by the European Union's Horizon 2020 research and innovation programme under the Marie Sk\l odowska-Curie Grant Agreement No. 846294,  and partially supported by the Hungarian National Research, Development and Innovation Office (NKFIH) via grants no. K124152, and no. KH129601.
}
\dedicatory{To R\'oza Virosztek}
\keywords{quantum Jensen-Shannon divergence, metric property}
\subjclass[2010]{Primary: 46N50. Secondary: 47N50, 81Q10.}
\begin{document}
\maketitle
\begin{abstract}
In this short note, we prove that the square root of the quantum Jensen-Shannon divergence is a true metric on the cone of positive matrices, and hence in particular on the quantum state space. 
\end{abstract}

%=================================================================================================
\section{Introduction} \label{sec:intro}
\subsection{Motivation, goals} \label{susec:motiv-goals}
Quantum Jensen-Shannon divergence (QJSD) is among the most popular dissimilarity measures in information theory as it has a wide range of applications in the theory of complex networks \cite{dpa-jcn-3-2015, dnal-nature-2015}, pattern recognition \cite{brth-pattrec-48-2015}, graph theory \cite{rthw-pre-88-2013}, and chemical physics \cite{aal-jcp-130-2009}, beyond its applications in quantum theory \cite{dlh-pra-84-2011, rpjb-prl-116-2016, rfz-prl-105-2010}.
It is a symmetrized version of the quantum relative entropy, and hence several desirable properties of a divergence measure on quantum states (e.g., monotonicity under quantum maps) can be justified for the QJSD directly by the fact that they hold for the relative entropy. Although QJSD is a symmetric divergence measure, it is not a metric as it does not satisfy the triangle inequality. However, its square root is known to be a metric for commuting states and for pure states \cite{lmbcp-pra-77-2008}.
\par
Based on numerical investigations, it was conjectured more than ten years ago that the square root of the QJSD is a metric on the quantum state space in any finite dimension \cite{bh-pra-79-2009,lmbcp-pra-77-2008}. However, this conjecture has not been proved yet, as mentioned, for example, in \cite{dlh-pra-84-2011, dnal-nature-2015, mn-ijtp-53-2014, rpjb-prl-116-2016, rfz-prl-105-2010}.
Now we prove more: we prove that the square root of the quantum Jensen-Shannon divergence is a true metric on the cone of positive matrices in any dimension.
\par
Furtheremore, we will interpret a result of Carlen, Lieb, and Seiringer which tells us that on the qubit state space, the metric induced by the QJSD is a Hilbert space metric, and this is the case for a wide family of quantum Jensen divergences indexed by operator convex functions, as well.

%================================================
\subsection{Notation, definition and basic properties of the QJSD} \label{susec:b-not-prop}
Throughout this note, $M_n^+(\C)$ stands for the cone of all positive semidefinite complex $n \times n$ matrices, $I$ denotes the identity matrix, and $\eta(x)=x \log{x}$ is the standard entropy function on $[0,\infty).$ Density matrices are positive semidefinite matrices with unit trace. 
\par
The most attractive quantum version of the classical Kullback-Leibler divergence is the Umegaki relative entropy, which is denoted by $S(.,.)$ and is defined for densities $\rho$ and $\sigma$ by
$$
S\ler{\rho, \sigma}=\tr \rho \ler{\log \rho -\log \sigma}.
$$
The Umegaki relative entropy is jointly convex \cite{lieb-aim-11-1973,lind-cmp-39-1974} and hence monotone under quantum maps, i.e., completely positive and trace preserving transformations of the state space \cite{lind-cmp-39-1974, uhlmann}.
\par
The quantum Jensen-Shannon divergence (QJSD) is denoted by $J(.,.)$ and is defined by
$$
J(\rho, \sigma)=\fel S \ler{\rho, \fel\ler{\rho+\sigma}}+\fel S \ler{\sigma, \fel\ler{\rho+\sigma}}.
$$
Note that the QJSD can be written in the form
$$
J(\rho,\sigma)=\fel\tr \eta(\rho)+\fel \tr \eta(\sigma)-\tr \eta \ler{\frac{\rho+\sigma}{2}} \qquad \ler{\rho,\sigma \in M_n^+(\C)},
$$
and we will prefer this latter form in the sequel. The QJSD is jointly convex, monotone under quantum maps, and the symmetry transformations of the state space and the positive cone with respect to the QJSD has been described in \cite{mn-ijtp-53-2014, mpv-laa-495-2016, v-lmp-106-2016}. It was conjectured more than ten years ago that the square root of the QJSD is a metric on the quantum state space \cite{bh-pra-79-2009,lmbcp-pra-77-2008}, but this has not been proved yet. In the next section, we give a proof of this metric property.
%=================================================================================================
\section{Main result} \label{sec:main-res}

\begin{theorem} \label{thm:main}
The square root of the quantum Jensen-Shannon divergence given by
$$
J(A,B)=\fel\tr \eta(A)+\fel \tr \eta(B)-\tr \eta \ler{\frac{A+B}{2}} \qquad \ler{A,B \in M_n^+(\C)}
$$
is a true metric.
\end{theorem}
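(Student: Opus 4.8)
The plan is to verify the three metric axioms for $d(A,B):=\sqrt{J(A,B)}$, with essentially all of the work going into the triangle inequality. Symmetry is immediate from the formula. Nonnegativity together with the identity of indiscernibles is a consequence of strict convexity: since $\eta$ is strictly convex, $X\mapsto\tr\eta(X)$ is strictly convex on Hermitian matrices, so $\tr\eta\ler{\tfrac{A+B}{2}}\le\fel\tr\eta(A)+\fel\tr\eta(B)$ with equality exactly when $A=B$; hence $J(A,B)\ge 0$ and $J(A,B)=0\iff A=B$. For the triangle inequality my strategy is to prove the stronger statement that $J$ is a \emph{negative-definite kernel} on $M_n^+(\C)$, i.e.\ $\sum_{i,j}c_ic_j J(A_i,A_j)\le 0$ whenever $\sum_i c_i=0$ (note $J(A,A)=0$ and $J$ is symmetric). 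By Schoenberg's theorem this is equivalent to an isometric embedding $A\mapsto F(A)$ of $\ler{M_n^+(\C),d}$ into a Hilbert space with $J(A,B)=\norm{F(A)-F(B)}^2$, whence the triangle inequality for $d$ descends from the one in the Hilbert space.

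To access negative-definiteness I would pass to an integral representation of $\eta$. Starting from $\log x=\int_0^\infty\ler{\tfrac{1}{1+t}-\tfrac{1}{x+t}}\dt$ one gets $\eta(x)=x\log x=\int_0^\infty\ler{\tfrac{x}{1+t}-1+\tfrac{t}{x+t}}\dt$; applying the trace and functional calculus, and using that the part of the integrand affine in the matrix variable cancels in the combination (because $\fel A+\fel B-\tfrac{A+B}{2}=0$, and likewise the constant term under the trace), I expect to arrive at
\[
J(A,B)=\int_0^\infty t\,\tr\ler{\fel(A+tI)^{-1}+\fel(B+tI)^{-1}-\ler{\tfrac{A+B}{2}+tI}^{-1}}\dt .
\]
A positively weighted superposition of negative-definite kernels is again negative-definite, so it suffices to treat the integrand for each fixed $t$. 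Expanding $\sum_{i,j}c_ic_j(\cdots)$ under the constraint $\sum_i c_i=0$, the two single-variable resolvent terms drop out, and the problem collapses to showing that $(A,B)\mapsto\tr\ler{\ler{\tfrac{A+B}{2}+tI}^{-1}}$ is a \emph{positive}-definite kernel; via $X^{-1}=\int_0^\infty e^{-sX}\ds$ this reduces in turn to the key claim that
\[
(A,B)\longmapsto \tr\ler{e^{-(A+B)}}
\]
is positive definite on $M_n^+(\C)$.

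This key claim is where I expect the real difficulty to lie, and it is a genuinely non-commutative phenomenon. In the commutative (or scalar) case the analogous kernel is trivially positive definite, since $\tfrac{1}{a+b}=\int_0^\infty e^{-ta}e^{-tb}\dt$ exhibits it as the inner product $\inner{e^{-t\,a}}{e^{-t\,b}}$ in $L^2(0,\infty)$, and the classical proof that $\sqrt{\mathrm{JSD}}$ embeds into Hilbert space runs exactly this way, coordinate by coordinate. For matrices this factorization breaks down precisely because $e^{-(A+B)}\ne e^{-A}e^{-B}$, so the cross term cannot be split into separate features of $A$ and $B$. I would instead try to realize $\tr\ler{e^{-(A+B)}}$ directly as a Gram inner product $\inner{F(A)}{F(B)}$ in a suitable, typically infinite-dimensional, Hilbert space. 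The tempting shortcut of approximating by Lie--Trotter products $\tr\ler{\ler{e^{-A/m}e^{-B/m}}^m}$ and passing positive-definiteness through the limit should be treated with caution, since it is not clear that these finite approximants are positive definite for $m\ge 2$; the embedding has to be constructed honestly, e.g.\ through an imaginary-time/reflection-positivity representation of the trace or an explicit operator-theoretic integral formula. Once this claim is secured, the superposition argument yields negative-definiteness of $J$, Schoenberg's theorem provides the Hilbert-space embedding and hence the triangle inequality, and combined with the strict-convexity argument above this establishes that $d=\sqrt{J}$ is a true metric.
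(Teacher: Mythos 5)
Your handling of symmetry, positivity, and definiteness matches the paper, and your integral representation is correct: from $\eta(x)=\int_0^\infty\ler{\tfrac{x}{1+t}-1+\tfrac{t}{x+t}}\dt$ one does get
$$
J(A,B)=\int_0^\infty t\,\tr\ler{\fel(A+tI)^{-1}+\fel(B+tI)^{-1}-\ler{\tfrac{A+B}{2}+tI}^{-1}}\dt .
$$
The fatal problem is the step you defer to the end: the ``key claim'' that $(A,B)\mapsto\tr\ler{e^{-(A+B)}}$ is a positive definite kernel on $M_n^+(\C)$ is not merely hard --- it is false for $n\geq 2$. If it held, then for every $s>0$ the kernel $(A,B)\mapsto\tr\ler{e^{-s(A+B)/2}}$ would be positive definite (substitute $A\mapsto sA/2$), and integrating against $e^{-st}\,\ds$ would make $(A,B)\mapsto\tr\ler{\ler{\tfrac{A+B}{2}+tI}^{-1}}$ positive definite for every $t>0$. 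But the very same cancellation you exploit for $\eta$ applies to the operator convex function $-\log$: using $-\log x=\int_0^\infty\ler{\tfrac{1}{x+t}-\tfrac{1}{1+t}}\dt$, one finds for $\sum_i c_i=0$ and positive definite $A_i$ that
$$
\sum_{i,j}c_ic_j\,d_S^2\ler{A_i,A_j}=-\int_0^\infty\sum_{i,j}c_ic_j\tr\ler{\ler{\tfrac{A_i+A_j}{2}+tI}^{-1}}\dt ,
$$
so your claim would force the S-divergence $d_S^2$ to be negative definite, hence (by Schoenberg) $d_S$ would embed isometrically into a Hilbert space. Sra's result, cited in Section 3 of the paper, says precisely that no such embedding exists already for $2\times 2$ positive definite matrices. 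The only difference between the two divergences in this representation is the weight ($t\,\dt$ for $\eta$ versus $\dt$ for $-\log$), and a fixed-$t$ positive-definiteness claim cannot distinguish them; therefore the fixed-$t$ route is closed, not just unfinished.

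There is a second, strategic issue: even the statement you reduce everything to --- negative definiteness of $J$ itself, equivalently a Hilbert space embedding of $\sqrt{J}$ --- is strictly stronger than the theorem and is stated in the paper as an open problem on $M_n^+(\C)$ for $n\geq 2$ (it is known only in the qubit case, by the Carlen--Lieb--Seiringer argument reproduced there). The paper's actual proof sidesteps positive/negative definiteness entirely: it establishes the representation $J(A,B)=\int_0^\infty d_S^2(A+tI,B+tI)\dt$ by differentiating in $t$ and checking decay as $t\to\infty$, then uses only Sra's theorem that $d_S$ \emph{is a metric} (a much weaker fact than embeddability), applying the triangle inequality for $d_S$ under the integral and then Cauchy--Schwarz to get $J(A,C)\leq\sqrt{J(A,C)}\ler{\sqrt{J(A,B)}+\sqrt{J(B,C)}}$. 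That is the missing idea: exploit the pointwise metric property of the shifted S-divergences, rather than trying to upgrade each fixed-$t$ kernel to a positive definite one.
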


\begin{proof}
Symmetry, positivity, and definiteness is clear (the latter by the strict convexity of $\eta$), we have to show the triangle inequality.

\paragraph*{ \bf Step 1.}
The quantum Jensen-Shannon divergence admits the integral representation
\be \label{eq:int-rep}
J(A,B)=\int_0^{\infty}d_S^2(A+tI, B+tI) \dt \qquad \ler{A,B \in M_n^+(\C)}
\ee
where $d_S$ is the square root of the $S$-divergence defined by
$$
d_S^2(A,B)=-\fel\tr \log A-\fel\tr \log B+\tr \log \ler{\frac{A+B}{2}}
$$
for positive definite matrices \cite{Sra}.
On one hand,
$$
\frac{\dd}{\dt}J\ler{A+tI, B+tI}
=\frac{\dd}{\dt}\ler{\fel\tr \eta(A+tI)+\fel\tr \eta(B+tI)-\tr \eta \ler{\frac{A+B}{2}+tI}}
$$
$$
=\fel\tr \ler{\log(A+tI)+I}+\fel\tr \ler{\log(B+tI)+I}-\tr \ler{\log \ler{\frac{A+B}{2}+tI}+I}
$$
$$
=-d_S^2\ler{A+tI,B+tI}.
$$
On the other hand,
$$
\lim_{t \to \infty} J\ler{A+tI, B+tI}=0
$$
for every $A, B \in M_n^+(\C),$ because the Jensen-Shannon divergence is homogeneous, $\eta(1+r)=r+ \bigO{r^2},$ and hence
$$
J\ler{A+tI, B+tI}=t J \ler{I+\frac{A}{t}, I+\frac{B}{t}}
=t \ler{\fel\tr \eta \ler{I+\frac{A}{t}}+\fel\tr \eta \ler{I+\frac{B}{t}}-\tr \eta \ler{I+\frac{A+B}{2t}}}
$$
$$
=t\ler{\fel \tr \frac{A}{t}+\bigO{t^{-2}} +\fel \tr \frac{B}{t}+\bigO{t^{-2}} -\tr \frac{A+B}{2t}+\bigO{t^{-2}}}
=\bigO{t^{-1}}.
$$
So
$$
J(A,B)=\lim_{t \to \infty} \ler{J(A,B)- J(A+tI, B+tI)}=-\lim_{t \to \infty} \int_0^t \frac{\dd}{\dd r}J(A+rI, B+rI) \dd r
$$
$$
=\int_0^{\infty}d_S^2(A+rI, B+rI) \dd r
$$
as claimed.

\paragraph*{ \bf Step 2.}
The main result of \cite{Sra} is that $d_S$ is a metric on the positive definite cone. Therefore, for every $A,B,C \in M_n^+(\C)$ we have
$$
J(A,C)=\int_0^\infty d_S^2(A+tI,C+tI) \dt 
$$
$$
\leq \int_0^\infty d_S(A+tI, C+tI)\ler{d_S(A+tI, B+tI)+d_S(B+tI, C+tI)} \dt
$$
$$
\leq \sqrt{ \int_0^\infty d_S^2(A+tI, C+tI) \dt}\ler{\sqrt{ \int_0^\infty d_S^2(A+tI, B+tI) \dt}+\sqrt{ \int_0^\infty d_S^2(B+tI, C+tI) \dt}}
$$
$$
=\sqrt{J(A,C)}\ler{\sqrt{J(A,B)}+\sqrt{J(B,C)}}
$$
where the first inequality relies on Sra's result, and the second one follows from the Cauchy-Schwartz and the Minkowski inequalities.
Dividing by $\sqrt{J(A,C)}$ completes the proof.
\end{proof}

%=================================================================================================
\section{The $2 \times 2$ case}
In this section we discuss the case of $2 \times 2$ matrices, because in this special case, one can prove more than the metric property of the square root of the QJSD in the following sense: on one hand, it can be shown that this metric is a Hilbert space metric, on the other hand, this is true not only for the QJSD, but for a large family of quantum Jensen divergences indexed by operator convex functions.
\par
In the sequel, I interpret the argument of Eric Carlen, Elliott Lieb, and Robert Seiringer that I learned from Robert Seiringer in 2017. The symbol $\cS\ler{\C^2}$ stands for the space of $2 \times 2$ density matrices.
\begin{theorem}[Carlen-Lieb-Seiringer]
For an operator convex function $f:[0,\infty) \rightarrow \R,$ the square root of the symmetric quantum Jensen $f$-divergence defined by
$$
J_f\ler{\rho, \sigma}:=\fel \ler{\tr f\ler{\rho}+ \tr f\ler{\sigma}}-\tr f \ler{\frac{\rho+\sigma}{2}} \qquad \ler{\rho, \sigma \in \cS\ler{\C^2}}
$$
is a true metric on $\cS\ler{\C^2},$ moreover, it admits a Hilbert space embedding. 
\end{theorem}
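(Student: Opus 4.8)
The plan is to prove the stronger Hilbert-space embedding claim and let the metric property follow for free: once we exhibit a map $\Phi$ into a real Hilbert space with $J_f\ler{\rho,\sigma}=\norm{\Phi(\rho)-\Phi(\sigma)}^2$, the triangle inequality for $\sqrt{J_f}$ is inherited from the Hilbert norm, positivity is automatic, and definiteness holds as soon as $f$ is non-affine (exactly as in the proof of Theorem~\ref{thm:main}). By Schoenberg's theorem, such an embedding exists if and only if $J_f$ is a conditionally negative definite kernel on $\cS\ler{\C^2}$; pinning the maximally mixed state $\tau_0=\fel I$ as base point, this is in turn equivalent to positive semidefiniteness of $\ler{\rho,\sigma}\mapsto\fel\ler{J_f(\rho,\tau_0)+J_f(\sigma,\tau_0)-J_f(\rho,\sigma)}$.

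First I would pass to the Bloch parametrization $\rho=\fel\ler{I+\vec r\cdot\vec\sigma}$, under which $\rho$ has eigenvalues $\ler{1\pm\abs{\vec r}}/2$, midpoints of states correspond to midpoints of Bloch vectors, and $\tr f(\rho)=F(\vec r)$ for the radial function $F(\vec r)=f\ler{\tfrac{1+\abs{\vec r}}{2}}+f\ler{\tfrac{1-\abs{\vec r}}{2}}$. In these coordinates $J_f$ becomes the Jensen gap $\fel F(\vec r)+\fel F(\vec s)-F\ler{\tfrac{\vec r+\vec s}{2}}$, and the base-point kernel above reduces, after the substitution $\vec r\mapsto\vec r/2$, to positive semidefiniteness of $G(\vec u,\vec v):=F(0)-F(\vec u)-F(\vec v)+F(\vec u+\vec v)$ on the half-ball $\abs{\vec u},\abs{\vec v}\le\tfrac12$.

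The next observation is that $J_f$ (hence $G$) is \emph{linear} in $f$ and vanishes whenever $f$ is affine. Using the integral representation of operator convex functions on $[0,\infty)$, I would write $f$, modulo an affine summand, as $\gamma x^2+\int_0^\infty\tfrac{x^2}{x+\tau}\,d\mu(\tau)$ with $\gamma\ge0$ and $\mu\ge0$. The quadratic term contributes the flat piece $J_{x^2}(\rho,\sigma)=\tfrac18\abs{\vec r-\vec s}^2$, which is a squared Euclidean distance with the obvious embedding $\vec r\mapsto\tfrac{1}{2\sqrt2}\vec r$; and each resolvent term $\tfrac{x^2}{x+\tau}$ coincides, modulo an affine summand, with $\tau^2\ler{x+\tau}^{-1}$. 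Since conditional negative definiteness — equivalently, positive semidefiniteness of $G$ — is preserved under positive linear combinations and pointwise limits, the whole theorem reduces to the single building block $h_\tau(x)=\ler{x+\tau}^{-1}$, $\tau>0$. A direct computation gives its explicit radial profile $F_\tau(\vec r)=\tfrac{4b}{b^2-\abs{\vec r}^2}$ with $b=2\tau+1>1$; equivalently, via $\ler{x+\tau}^{-1}=\int_0^\infty e^{-\lambda\tau}e^{-\lambda x}\,d\lambda$, the block is a positive superposition of the Gibbs profiles $\tr e^{-\lambda\rho}=2e^{-\lambda/2}\cosh\ler{\lambda\abs{\vec r}/2}$.

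The heart of the matter — and the step I expect to be the genuine obstacle — is to show that $G$ is positive semidefinite for the single profile $F_\tau$. Naive termwise estimates are hopeless here: expanding $\cosh$ in powers shows the homogeneous pieces $\abs{\vec u+\vec v}^{2n}-\abs{\vec u}^{2n}-\abs{\vec v}^{2n}$ fail to be positive semidefinite already for $n\ge2$, so the positivity is a non-elementary feature of the specific resolvent coefficients and of the square root hidden in $\abs{\vec u+\vec v}$. My plan to extract it is to exploit the rotational ($SO(3)$) invariance of $G$: expanding in spherical harmonics via the addition theorem reduces positive semidefiniteness of $G$ to that of each radial Legendre coefficient, which for $F_\tau$ is, up to positive constants, $\tfrac{1}{\abs{\vec u}\,\abs{\vec v}}\,Q_\ell\!\ler{\tfrac{b^2-\abs{\vec u}^2-\abs{\vec v}^2}{2\,\abs{\vec u}\,\abs{\vec v}}}$ for $\ell\ge1$ — with $Q_\ell$ the Legendre function of the second kind, whose argument exceeds $1$ on the relevant range — while the $\ell=0$ sector additionally absorbs the terms $F(0)-F(\vec u)-F(\vec v)$. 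I expect each such kernel to be the Green's function of a positive Sturm--Liouville operator in the radial variable (equivalently, the resolvent kernel of a hyperbolic-type Laplacian carrying angular momentum $\ell$), and therefore automatically positive semidefinite; establishing this positivity, either through the integral representation of $Q_\ell$ or directly through the Green's-function/ODE structure, is the technical crux on which the entire argument rests.
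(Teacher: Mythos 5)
Your reduction scheme coincides with the paper's argument up to the decisive moment: Schoenberg's theorem, the integral representation of operator convex functions, the disposal of the affine and quadratic parts, the passage to Bloch coordinates, and the reduction to a single resolvent block are all exactly the paper's steps (your radial profile $F_\tau(r)=\frac{4b}{b^2-r^2}$, $b=2\tau+1$, is the paper's kernel $\frac{2t^4+t^3}{(t+\fel)^2-\norm{\cdot}^2}$ up to a positive factor). But at the step you yourself call ``the technical crux on which the entire argument rests'' you have no proof: positive semidefiniteness of the resolvent kernel is only \emph{expected}, via an unverified chain (spherical harmonic decomposition, positivity of each radial Legendre-$Q_\ell$ coefficient kernel, identification of these with Green's functions of positive Sturm--Liouville operators). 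Nothing in the proposal establishes that the kernel built from $Q_\ell$, restricted to the relevant range of radii, really is the integral kernel of the inverse of a positive operator; the identification, the domain, and the boundary conditions are all left open, and this is precisely where the theorem is hard. So the proposal is a correct strategy with a genuine hole at its center.

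The paper closes this hole with a short elementary device that your route misses. It proves the stronger, unconditional statement that the midpoint kernel $\ler{\rho_j,\rho_k}\mapsto \tr f_t\ler{\frac{\rho_j+\rho_k}{2}}$ with $f_t(x)=\frac{t^3}{t+x}$ is positive definite, which immediately yields conditional negative definiteness of $J_{f_t}$ (no base-point kernel $G$ is needed). In Bloch coordinates this kernel is a positive multiple of $\ler{1-\norm{\tfrac{r_j+r_k}{4t+2}}^2}^{-1}$, and the trick is the identity
$$
\frac{1}{1-\norm{\tfrac{r_j+r_k}{4t+2}}^2}
=\int_0^\infty e^{-s}\,
\exp\ler{s\norm{\tfrac{r_j}{4t+2}}^2}
\exp\ler{s\norm{\tfrac{r_k}{4t+2}}^2}
\exp\ler{\tfrac{2s}{\ler{4t+2}^2}\inner{r_j}{r_k}}\ds ,
$$
obtained by writing $\frac{1}{1-z}=\int_0^\infty e^{-s(1-z)}\ds$ and splitting $\norm{r_j+r_k}^2$ \emph{inside the exponential} into diagonal terms plus twice the inner product. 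The first two factors form a factorized (rank-one) positive definite kernel $g(r_j)g(r_k)$; the third is positive definite because $\inner{r_j}{r_k}$ is and the exponential has nonnegative Taylor coefficients (Schur product theorem); and integrals of positive definite kernels are positive definite. This is exactly the mechanism that defeats the obstruction you correctly diagnosed: the diagonal terms $\norm{r_j}^2,\norm{r_k}^2$, which ruin naive termwise positivity, are absorbed into factorized exponentials instead of being estimated. To complete your write-up along the paper's lines, replace the spherical-harmonics program by this one identity.
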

\begin{proof}
Operator convex functions on $[0, \infty)$ admit the integral representation
$$
f(x)=a+bx+c x^2+\int_0^\infty \ler{\frac{x}{1+t}-\frac{x}{x+t}} \dd \mu (t)
=a+bx+c x^2+\int_0^\infty \ler{\frac{x}{1+t}-1+\frac{t}{x+t}} \dd \mu (t), 
$$
where $a,b \in \R, c\geq 0,$ and $\mu$ is a positive measure on $(0, \infty)$ with $\int_{(0,\infty)}(1+t)^{-2} \dd \mu (t)< +\infty.$ (see, e.g., \cite[8.1]{hmpb-rmp-2011}).
\par
Jensen divergences are linear in the sense that $J_{\alpha f +g}(\cdot,\cdot)=\alpha J_{f}(\cdot,\cdot)+J_{g}(\cdot,\cdot).$
Affine functions do not matter, $J_{a}\ler{\cdot,\cdot}\equiv 0$ for $a(x)=\alpha x+\beta.$ Note that the quadratic function $q(x)=x^2$ gives the Hilbert-Schmidt norm square, $J_q\ler{\rho,\sigma}=\tfrac{1}{4}\norm{\rho-\sigma}_{HS}^2.$
\par
By Schoenberg's theorem on the Hilbert space representation of negative definite kernels (see the original works \cite{sch-aom-39-1938,sch-tams-44-1938} or \cite[Chapter 3, 3.2.]{bcp-book}), it suffices to prove that $J_{f}\ler{\cdot,\cdot}$ is negative definite on $\cS\ler{\C^2},$ that is,
\be \label{eq:cnd-def}
\sum_{j,k=1}^m c_j c_k J_f\ler{\rho_j, \rho_k} \leq 0
\ee
for all $c_1,\dots,c_m \in \R$ with $\sum_{j=1}^m c_j=0$ and for all $\rho_1, \dots, \rho_m \in \cS\ler{\C^2}.$
This clearly holds for $J_q\ler{\cdot,\cdot},$ so we need to prove it only for $J_{f_t}\ler{\cdot,\cdot},$ where $f_t(x)=\frac{t}{t+x}$ and $t>0.$
\par
Let $\mathbf{B}$ denote the closed unit ball in $\R^3,$ and let $r_j \in \mathbf{B}$ denote the Bloch vector of $\rho_j,$ that is, $\rho_j=\fel \ler{I+r_j \cdot \sigma},$ where $\sigma=\left\{\sigma_x, \sigma_y, \sigma_z\right\}$ is the vector containing the Pauli matrices. Now

$$
\sum_{j,k=1}^m c_j c_k J_{f_t} \ler{\rho_j, \rho_k}
=-\sum_{j,k=1}^m c_j c_k \tr \frac{t}{t+\frac{\rho_j+\rho_k}{2}}
$$
$$
=
-\sum_{j,k=1}^m c_j c_k \ler{\frac{t}{t+\tfrac{1}{2}\ler{1+\norm{\tfrac{r_j+r_k}{2}}}}+\frac{t}{t+\tfrac{1}{2}\ler{1-\norm{\tfrac{r_j+r_k}{2}}}}}
$$
$$
=
-\sum_{j,k=1}^m c_j c_k \ler{\frac{2 t^2+t}{\ler{t+\tfrac{1}{2}}^2-\norm{\tfrac{r_j+r_k}{4}}^2}}
=
-\sum_{j,k=1}^m c_j c_k \ler{\frac{2 t^2+t}{\ler{t+\tfrac{1}{2}}^2}}\ler{\frac{1}{1-\norm{\tfrac{r_j+r_k}{4t+2}}^2}}
$$

where $\norm{.}$ is the Euclidean norm in $\R^3.$
Moreover,
$$
\frac{1}{1-\norm{\tfrac{r_j+r_k}{4t+2}}^2}=
\int_{0}^\infty \exp\ler{-s \ler{1-\norm{\tfrac{r_j+r_k}{4t+2}}^2}} \dd s
$$
$$
=
\int_{0}^\infty e^{-s}\exp\ler{ s \norm{\tfrac{r_j}{4t+2}}^2}\exp\ler{ s \norm{\tfrac{r_k}{4t+2}}^2}
\exp\ler{ s \frac{2 \inner{r_j}{r_k}}{\ler{4t+2}^2}}\dd s.
$$
Therefore,
\be \label{eq:final}
\sum_{j,k=1}^m c_j c_k J_{f_t} \ler{\rho_j, \rho_k}=-\frac{2 t^2+t}{\ler{t+\tfrac{1}{2}}^2} \int_0^{\infty} e^{-s} \sum_{j,k=1}^m c_j(s) c_k(s) \exp\ler{\frac{2 s \inner{r_j}{r_k}}{\ler{4t+2}^2}} \dd s,
\ee
where $c_j(s):=c_j \exp{\ler{s \norm{\tfrac{r_j}{4t+2}}^2}}.$

The right hand side of \eqref{eq:final} is non-positive for any $c_1, \dots, c_m \in \R$ and $r_1, \dots r_m \in \mathbf{B}$ as the product of positive definite functions is positive definite, and all the coefficients of the power series of the exponential function are positive.
The proof is done.

\end{proof}

We note that Bri\"et and Harremo\"es gave a proof of the Hilbert space embedding property for $2 \times 2$ densities and for a special class of operator convex functions containing the standard entropy function \cite{bh-pra-79-2009}. However, their proof contains a mistake, namely, in the proof of \cite[Lemma 4]{bh-pra-79-2009}, which is the key step in their argument, it is erroneously claimed that the function
$$
\cS\ler{\C^2} \times \cS\ler{\C^2} \rightarrow \R; \qquad \ler{\rho,\sigma} \mapsto 2 \tr \ler{\fel\ler{\rho+\sigma}}^2 -1
$$
is positive definite. For example, if
$$
\rho=\left[\ba{cc} 1 & 0 \\ 0 & 0 \ea\right] \, \text{ and } \, \sigma=\left[\ba{cc} \fel & 0 \\ 0 & \fel \ea\right],
$$
then the matrix
$$
\left[\ba{cc} 2 \tr \rho^2 -1 & 2 \tr \ler{\fel\ler{\rho+\sigma}}^2 -1 \\ 2 \tr \ler{\fel\ler{\rho+\sigma}}^2 -1 & 2 \tr \sigma^2 -1 \ea\right]
=\left[\ba{cc} 1 & \frac{1}{4} \\ \frac{1}{4} & 0 \ea\right]
$$
is indefinite.

\par
We also note that it is known that the cone of $2 \times 2$ positive definite matrices equipped with the Jensen divergence corresponding to the operator convex function $x \mapsto -\log x$ does not admit any Hilbert space embedding \cite{Sra}. This fact shows that the result of Carlen, Lieb, and Seiringer is optimal in the sense that one can not go beyond the qubit state space in proving the Hilbert space embedding property in its full generality, that is, for all operator convex functions. (To be precise, $x \mapsto - \log x$ is operator convex only on $(0,\infty)$ and not on $[0, \infty),$ but the integral representation $\log{x}=\int_0^{\infty} \ler{\frac{1}{1+t}-\frac{1}{x+t}} \dt$ shows that from the viewpoint of negative definite kernels, it behaves similarly to operator convex functions on $[0,\infty).$)
However, for the standard entropy function $\eta(x)=x \log{x},$ we do not have such counterexamples.
So it is still an open question, whether the metric induced by the QJSD is a Hilbert space metric on the cone of positive semidefinite $n \times n$ matrices for $n \geq 2,$ or on the space of $n \times n$ density matrices for $n \geq 3.$

\subsubsection*{Acknowledgements}
I would like to thank Robert Seiringer for great discussions on the topic and for showing me their argument on the case of $2 \times 2$ matrices. I am also grateful to Lajos Moln\'ar for proposing this problem to me around 2015, and to the anonymous referee for his/her valuable comments and suggestions.
%======================================
\bibliographystyle{amsplain}

\begin{thebibliography}{99}

\bibitem{aal-jcp-130-2009} J. Antol\'in, J. C. Angulo, S. L\'opez-Rosa, \emph{Fisher and Jensenâ€“Shannon divergences: Quantitative comparisons among
distributions. Application to position and momentum atomic densities,} J. Chem. Phys. {\bf 130} (2009), 074110.

%\bibitem{a-jmp-55-2014} K. M. R. Audenaert, \emph{Quantum skew divergence,} J. Math. Phys. {\bf 55} (2014), 112202.

\bibitem{brth-pattrec-48-2015} L. Bai, L. Rossi, A. Torsello, E. R. Hancock, \emph{A quantum Jensenâ€“Shannon graph kernel for unattributed graphs,} Pattern Recognition {\bf 48} (2015), 344--355.

%\bibitem{bh-lncs-6854-2011} L. Bai, E.R. Hancock, \emph{Graph clustering using the Jensen-Shannon kernel,} Lecture Notes in Computer Science {\bf 6854} Springer, Berlin, Heidelberg, 2011.

\bibitem{bcp-book} C. Berg, J. P. R. Christensen, P. Ressel, \emph{Harmonic Analysis on Semigroups,} Springer-Verlag, New York, 1984.

\bibitem{bhatia} R. Bhatia, \emph{Matrix Analysis,} Springer-Verlag, New York, 1997.

\bibitem{bh-pra-79-2009} J. Bri\"et, P. Harremo\"es, \emph{Properties of classical and quantum Jensen-Shannon divergence,} Phys. Rev. A {\bf 79} (2009), 052311.

%\bibitem{cm-laa-436-2012} Z. Chebbi, M. Moakher, \emph{Means of Hermitian positive-definite matrices based on the log-determinant $\alpha$-divergence function,} Linear Algebra Appl. {\bf 436} (2012), 1872--1889.

\bibitem{dlh-pra-84-2011} J. Dajka, J. \L uczka, P. H\"anggi, \emph{Distance between quantum states in the presence of initial qubit-environment correlations: A comparative study,} Phys. Rev. A {\bf 84} (2011), 032120.

\bibitem{dpa-jcn-3-2015} M. De Domenico, M. A. Porter, A. Arenas, \emph{MuxViz: a tool for multilayer analysis and visualization of networks,} Journal of Complex Networks {\bf 3} (2015), 159--176.

\bibitem{dnal-nature-2015} M. De Domenico, V. Nicosia, A. Arenas, V. Latora, \emph{Structural reducibility of multilayer networks,} Nat. Commun. {\bf 6} (2015), 6864.

\bibitem{hmpb-rmp-2011} F. Hiai, M. Mosonyi, D. Petz, C. B\'eny, \emph{Quantum $f$-divergences and error correction,} Rev. Math. Phys. {\bf 23} (2011), 691--747.


\bibitem{lmbcp-pra-77-2008} P. W. Lamberti, A. P. Majtey, A. Borras, M. Casas, A. Plastino, \emph{Metric character of the quantum Jensen-Shannon divergence,} Phys. Rev. A {\bf 77} (2008), 052311.

\bibitem{lieb-aim-11-1973} E. H. Lieb, \emph{Convex trace functions and the Wignerâ€“Yanaseâ€“Dyson conjecture,} Adv. in Math. {\bf 11} (1973), 267-â€“288.

\bibitem{lind-cmp-39-1974} G. Lindblad, \emph{Expectations and entropy inequalities for finite quantum systems,} Commun. Math. Phys. {\bf 39} (1974), 111-â€“119.

%\bibitem{mlp-pra-72-2005} A. P. Majtey, P. W. Lamberti, D. P. Prato, \emph{Jensen-Shannon divergence as a measure of distinguishability between mixed quantum states,} Phys. Rev. A {\bf 72} (2005), 052310.

\bibitem{mn-ijtp-53-2014} L. Moln\'ar, G. Nagy, \emph{Transformations on density operators that leave the Holevo bound invariant,} Int. J. Theor. Phys. {\bf 53} (2014), 3273-â€“3278.

\bibitem{mpv-laa-495-2016} L. Moln\'ar, J. Pitrik, D. Virosztek, \emph{Maps on positive definite matrices preserving Bregman and Jensen divergences,} Linear Algebra Appl. {\bf 495} (2016), 174--189.

\bibitem{rpjb-prl-116-2016} C. Radhakrishnan, M. Parthasarathy, S. Jambulingam, T. Byrnes, \emph{Distribution of quantum coherence in multipartite systems,} Phys. Rev. Lett. {\bf 116} (2016), 150504.

\bibitem{rfz-prl-105-2010} W. Roga, M. Fannes, K. \.Zyczkowski, \emph{Universal bounds for the Holevo quantity, coherent information,
and the Jensen-Shannon divergence,} Phys. Rev. Lett. {\bf 105} (2010), 040505.

\bibitem{rthw-pre-88-2013} L. Rossi, A. Torsello, E. R. Hancock, R. C. Wilson, \emph{Characterizing graph symmetries through quantum Jensen-Shannon divergence,} Phys. Rev. E {\bf 88} (2013), 032806.

\bibitem{sch-aom-39-1938} I. J. Schoenberg, \emph{Metric spaces and completely monotone functions,} Ann. of Math. {\bf 39} (1938), 811--841.

\bibitem{sch-tams-44-1938} I. J. Schoenberg, \emph{Metric spaces and positive definite functions,} Trans. Amer. Math. Soc. {\bf 44} (1938), 522--536.

\bibitem{Sra} S. Sra, \emph{Positive definite matrices and the S-divergence,} Proc. Amer. Math. Soc. {\bf 144} (2016), 2787-2797.

\bibitem{uhlmann} A. Uhlmann, \emph{Endlich dimensionale dichtmatrizen, II,}  Wiss. Z. Karl-Marx-Univ. Leipzig, Math.-Naturwiss. Reihe {\bf 22} (1973), 139.

\bibitem{v-lmp-106-2016} D. Virosztek, \emph{Maps on quantum states preserving Bregman and Jensen divergences,} Lett. Math. Phys. {\bf 106} (2016), 1217--1234.

\end{thebibliography}

\end{document}